\newcommand{\NP}{{\sf NP}}
\newcommand{\FPT}{{\sf FPT}}
\newtheorem{observation}{Observation}
\newcommand{\problemdef}[3]{
	\begin{center}
		\begin{boxedminipage}{.99\textwidth}
			\textsc{{#1}}\\[2pt]
			\begin{tabular}{ r p{0.8\textwidth}}
				\textit{~~~~Instance:} & {#2}\\
				\textit{Question:} & {#3}
			\end{tabular}
		\end{boxedminipage}
	\end{center}
}
\begin{document}

\title{On the Parameterized Complexity of\\ $k$-Edge Colouring  \thanks{The second author was supported by  the Research Council of Norway via the project CLASSIS and the work was done when the second author was visiting the University of Fribourg funded by a scholarship of the University of Fribourg.
The third author was supported by the Leverhulme Trust (RPG-2016-258).}}

\author{
Esther Galby\inst{1}
\and 
Paloma T. Lima\inst{2}
\and
Dani{\"e}l Paulusma\inst{3}
\and
Bernard Ries\inst{1}
}

\institute{
Department of Informatics, University of Fribourg, Switzerland,  \texttt{\{esther.galby,bernard.ries\}@unifr.ch}
\and
Department of Informatics, University of Bergen, Norway, \texttt{paloma.lima@uib.no}
\and
Department of Computer Science, Durham University, UK, \texttt{daniel.paulusma@durham.ac.uk}
}

\maketitle

\begin{abstract}
For every fixed integer~$k\geq 1$, we prove that  $k$-{\sc Edge Colouring} is fixed-parameter-tractable when parameterized by the number of vertices of maximum degree.
\end{abstract} 

\section{Introduction}\label{s-intro}

For an integer $k\geq 1$, a {\it $k$-edge colouring} of a graph $G=(V,E)$ is a mapping $c:E\to \{1,\ldots,k\}$ such that
$c(e)\neq c(f)$ for any two edges $e$ and $f$ of $G$ that have a common end-vertex.
The {\sc Edge Colouring} problem is to decide if a given graph $G$ has a $k$-edge colouring for some given integer~$k$.
If $k\geq 1$ is {\it fixed}, that is, not part of the input, then we denote the problem as:

\problemdef{$k$-Edge Colouring}{a graph $G$.}{does $G$ have a $k$-edge colouring?} 
The {\it chromatic index} of a graph $G=(V,E)$ is the smallest integer~$k$ such that $G$ has a $k$-edge colouring. 
The {\it degree} $d_G(u)$ of a vertex~$u\in V$ is the size of its neighbourhood $N(u)=\{v\; |\; uv\in E\}$. We let $\Delta=\Delta_G$ denote the maximum degree of $G$. Vizing showed the following classical result.

\begin{theorem}[\cite{Vi64}]\label{t-vizing}
The chromatic index of a graph $G$ is either $\Delta$ or $\Delta+1$.
\end{theorem}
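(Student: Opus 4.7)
The lower bound $\chi'(G) \geq \Delta$ is immediate: at a vertex of maximum degree, the $\Delta$ incident edges pairwise share an endpoint, so they must all receive distinct colours. The real content of the theorem is the upper bound $\chi'(G) \leq \Delta+1$, and my plan is to prove this by induction on $|E(G)|$, following Vizing's original ``fan and Kempe chain'' argument.

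The base case (no edges) is trivial. For the inductive step, I would pick an arbitrary edge $uv$, delete it to obtain $G'$, apply the induction hypothesis to get a proper edge colouring $c'$ of $G'$ with colours from $\{1,\ldots,\Delta+1\}$, and then show that $c'$ can always be modified into a proper $(\Delta+1)$-edge colouring of the whole of $G$. The key observation enabling everything is a counting one: since every vertex has degree at most $\Delta$ in $G$, at each vertex there is always at least one colour in $\{1,\ldots,\Delta+1\}$ that is \emph{missing}, i.e., not used by any incident edge. So I can fix some colour $\alpha$ missing at $u$; if $\alpha$ is also missing at $v$ we simply set $c(uv)=\alpha$ and are done, and the remaining task is to handle the case where $\alpha$ appears on some edge incident to $v$.

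The technical device I would use is a \emph{Vizing fan} at $u$: a maximal sequence $v=v_1,v_2,\ldots,v_s$ of distinct neighbours of $u$ such that for each $i<s$, the edge $uv_{i+1}$ carries a colour $\beta_i$ that is missing at $v_i$. For each $v_i$ let $\beta_i$ denote some colour missing at $v_i$. If some $\beta_i$ happens to equal $\alpha$ (the colour missing at $u$), then I can ``rotate'' the fan: reassign $c(uv_j):=c(uv_{j+1})$ for $j<i$ and finally set $c(uv_i):=\alpha$, which gives a proper colouring of all of $G$. The difficult case, and the main obstacle of the proof, is when no $\beta_i$ equals $\alpha$; this is exactly where the induction would fail naively, and it is resolved by a Kempe chain swap. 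Specifically, I would consider the unique maximal path $P$ starting at $v_s$ whose edges alternate in colours $\alpha$ and $\beta_s$. A careful case analysis on the other endpoint of $P$ (which cannot be $u$, and must be shown to be some $v_i$ with $i<s$ or a vertex distinct from the fan) allows swapping $\alpha$ and $\beta_s$ along $P$, after which an appropriate rotation of a truncated fan completes the colouring.

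The crux of the argument is therefore the well-definedness and the case analysis of the Kempe chain: proving that swapping $\alpha$ and $\beta_s$ on $P$ cannot destroy the properness of the colouring, and that after the swap a rotation of a shorter fan finishes the job. Once this is carefully established, combining the lower bound with the upper bound gives $\Delta \leq \chi'(G) \leq \Delta+1$, as required.
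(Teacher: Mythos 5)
First, note that the paper does not prove this statement at all: Theorem~1 is cited from Vizing's 1964 paper, and the only related argument in the text (Lemma~2) follows the fan-free extension method of Ehrenfeucht, Faber and Kierstead rather than Vizing's original fan/Kempe-chain proof. So your choice of route --- induction on $|E(G)|$, a missing colour $\alpha$ at $u$, a maximal Vizing fan at $u$, rotation when possible, and an $\alpha/\beta$ Kempe swap otherwise --- is the classical and correct strategy, and the lower bound, the existence of missing colours, and the rotation step are all fine.

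However, as written there is a genuine gap, because the entire content of the theorem sits in precisely the part you defer (``once this is carefully established''), namely the case analysis for the alternating path $P$, and the one concrete claim you do make about it is wrong. You assert that the other endpoint of the maximal $\alpha/\beta_s$-alternating path starting at $v_s$ ``cannot be $u$''. It can: in the hard case $\beta_s$ is not missing at $u$ (otherwise you shift the whole fan and colour $uv_s$ with $\beta_s$, a subcase your rotation step, which only ever assigns $\alpha$, does not cover), so $u$ carries a $\beta_s$-edge, and by maximality of the fan that edge is $uv_{j+1}$ for some $j<s$ with $\beta_j=\beta_s$; the path $P$ may well arrive at $u$ along this edge and terminate there since $u$ misses $\alpha$. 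The standard proof turns exactly on the observation that $P$ cannot end at both $v_j$ and $v_s$ (equivalently, cannot simultaneously link $u$ to both), and splits into the two resulting cases, swapping colours along $P$ and then rotating the fan either up to $v_j$ or up to $v_s$; one must also verify that the swap does not disturb the missing colours at the fan vertices used afterwards. None of this is carried out, and the misidentification of the possible endpoints of $P$ suggests the case analysis would not go through as you have set it up. To complete the proof you need to (i) add the ``$\beta_s$ missing at $u$'' terminal case for the fan, (ii) correctly enumerate the possible endpoints of $P$ including $u$, $v_j$ and $v_s$, and (iii) check properness after the swap-then-rotate step in each case.
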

Due to Theorem~\ref{t-vizing}, we can make the following observation.

\begin{observation}\label{o-max}
Let $G$ be a graph. If $\Delta\leq k-1$, then  $G$ is a yes-instance of {\sc $k$-Edge Colouring}. If $\Delta\geq k+1$, then $G$ is a no-instance of {\sc $k$-Edge Colouring}.
 \end{observation}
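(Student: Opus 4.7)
The plan is to handle the two implications separately, since each follows from a simple and independent argument.

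For the first implication, suppose $\Delta \leq k-1$. By Theorem~\ref{t-vizing}, the chromatic index of $G$ is at most $\Delta + 1 \leq k$. Hence $G$ admits an edge colouring using at most $k$ colours; identifying the colour set with a subset of $\{1,\ldots,k\}$ gives a valid $k$-edge colouring, so $G$ is a yes-instance.

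For the second implication, suppose $\Delta \geq k+1$, and let $v$ be a vertex of $G$ with $d_G(v) = \Delta$. Every pair of edges incident to $v$ shares $v$ as a common end-vertex, so in any valid edge colouring these $\Delta$ edges must all receive pairwise distinct colours. Thus any edge colouring of $G$ requires at least $\Delta \geq k+1$ colours, and no $k$-edge colouring exists, so $G$ is a no-instance.

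I do not expect any genuine obstacle here: the first implication is an immediate consequence of Vizing's theorem, and the second is just the elementary fact that the chromatic index is always bounded below by $\Delta$. The observation merely packages these two bounds so that in later sections one may assume $\Delta \in \{k, k+1\}$ without loss of generality.
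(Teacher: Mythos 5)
Your proof is correct and follows exactly the paper's intended reasoning: the observation is presented there as an immediate consequence of Theorem~\ref{t-vizing}, with Vizing's upper bound $\Delta+1$ giving the first implication and the trivial lower bound of $\Delta$ on the chromatic index (implicit in the statement that the chromatic index lies in $\{\Delta,\Delta+1\}$) giving the second. One tiny slip in your closing remark: the two bounds leave only the case $\Delta = k$ open (not $\Delta \in \{k,k+1\}$), since instances with $\Delta \geq k+1$ are already rejected.
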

 
 By Observation~\ref{o-max} we may assume without loss of generality that an input graph of {\sc $k$-Edge Colouring} has maximum degree~$k$.
The following well-known hardness result was proven by Holyer for $k=3$ and by Leven and Galil for $k\geq 4$.

\begin{theorem}[\cite{Ho81,LG83}]\label{t-hard}
For every $k\geq 3$, {\sc $k$-Edge Colouring} is \NP-complete even for graphs in which \emph{every} vertex has degree~$k$. 
\end{theorem}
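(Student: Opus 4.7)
The plan is to prove NP-completeness in two steps after first establishing NP-membership, which is immediate since a candidate edge labelling $c:E\to\{1,\ldots,k\}$ can be verified in polynomial time by checking that no two edges sharing an endpoint receive the same colour. I would then separate the hardness argument into the base case $k=3$ and an inductive or gadget-based extension to $k\geq 4$, so that the $k$-regularity claim is preserved throughout.

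For $k=3$, I would reduce from a standard NP-hard Boolean problem, say \textsc{3-SAT} (or a convenient variant such as \textsc{NAE-3-SAT}), to $3$-\textsc{Edge Colouring} on cubic graphs. The key design step is a pair of gadgets: a \emph{variable gadget} whose edges must be coloured in one of two ``complementary'' patterns corresponding to the truth value of the variable, and a \emph{clause gadget} that admits a proper $3$-edge colouring if and only if at least one literal is true (and whose external interface connects to the variable gadgets). To keep the resulting graph cubic, I would use small symmetric building blocks (triangles, prisms, or odd subdivisions) so that every internal vertex has degree exactly $3$, and I would pair up unused interface edges to avoid creating degree-$2$ vertices. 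I would also include a parity/negation gadget so that the same variable can appear negated in different clauses.

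For $k\geq 4$, I would reduce from $3$-\textsc{Edge Colouring} on cubic graphs (the previous case). Given a cubic graph $G$, I would attach to each vertex of $G$ a \emph{palette-forcing} gadget of internal regularity $k$ and degree $k-3$ on its boundary, so that in any proper $k$-edge colouring of the new graph $G'$ a fixed set of $3$ colours is the only palette that can possibly appear on the three original edges incident to each vertex. Such a gadget can be built from near-regular bipartite pieces together with a small counting argument showing that any proper $k$-colouring of the gadget leaves precisely a $3$-colour budget free at the boundary. Then $G'$ is $k$-regular and $k$-edge colourable iff $G$ is $3$-edge colourable.

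The main obstacle is the gadget design, particularly the palette-forcing gadget in the $k\geq 4$ step: it must simultaneously be $k$-regular at every internal vertex, contribute exactly one edge of each of $k-3$ ``forced'' colours at the boundary of each original vertex, and be robust against alternative colourings that permute the roles of the forced and free colours inconsistently between neighbouring vertices. I would expect the technical heart of the proof to lie in a careful inductive or local analysis showing that such deviations cannot be globally completed, so that the encoded $3$-edge colouring of $G$ is the only way to realise a $k$-edge colouring of $G'$.
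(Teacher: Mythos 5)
First, note that the paper does not prove this statement at all: Theorem~\ref{t-hard} is quoted with attribution to Holyer~\cite{Ho81} (for $k=3$) and Leven and Galil~\cite{LG83} (for $k\geq 4$), and the present paper uses it as a black box. So your proposal can only be judged on its own merits, not against an in-paper argument. The membership part and the overall two-stage plan (cubic case first, then regular graphs of higher degree) do match the historical route, and the $k=3$ outline is the right shape for Holyer's reduction, although all of the actual content there lives in the variable/clause/inverting gadgets, which you do not construct.

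The genuine gap is in the $k\geq 4$ step, and it is fatal as described. A ``palette-forcing'' gadget attached locally to each vertex cannot force \emph{which} three colours remain free at that vertex: for any proper $k$-edge colouring $c$ of $G'$ and any permutation $\pi$ of $\{1,\ldots,k\}$, the map $\pi\circ c$ is again proper, so no gadget can pin down colour names, and---more importantly---disjoint gadgets at different vertices can realise \emph{different} $3$-element free palettes, constrained only through the single colour carried by each original edge. Your counting argument only controls the \emph{size} of the free palette, which is automatic ($k-(k-3)=3$), not its identity or its consistency across vertices. Without that synchronisation the reduction is vacuous: if $G'$ is $k$-edge colourable, the restriction to $E(G)$ is merely some proper edge colouring of the cubic graph $G$ using at most $k\geq 4$ colours, and by Theorem~\ref{t-vizing} \emph{every} cubic graph admits such a colouring, so $G'$ would be a yes-instance regardless of whether $G$ is $3$-edge colourable. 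You flag ``inconsistent permutations between neighbouring vertices'' as the technical heart, but you give no mechanism to rule them out, and a purely local, per-vertex gadget cannot supply one. The actual proof of Leven and Galil does not take this route; it generalises Holyer's truth-carrying (inverting) components directly to the $k$-regular setting, so that the colouring semantics encode the SAT instance rather than attempting to force a distinguished $3$-colour sub-palette.
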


In this note we consider the $k$-{\sc Edge Colouring} problem from the viewpoint of Parameterized Complexity, where problem inputs are specified by a main part $I$ of size~$n$ and a parameter $p$, which is assumed to be small compared to $n$. The main question is to determine if a problem is \emph{fixed-parameter tractabe} (\FPT), that is, if it can be solved in time 
$f(p)n^{O(1)}$, where $f$ is a computable (and possibly exponential) function of $p$. The choice of parameter depends on the context. 
By the above discussion, the most natural choice of parameter for $k$-{\sc Edge Colouring} is the number of vertices of maximum degree. We prove the following result (note that $k$-{\sc Edge Colouring} is polynomial-time solvable for
$k\leq 2$). 

\begin{theorem}\label{t-main}
For every $k\geq 1$, $k$-{\sc Edge Colouring} can be solved in 
 $O(p^2k^{\frac{k^2p}{2}+4}+n+m)$ time 
 on graphs with $n$ vertices, $p$ of which have maximum degree, and $m$ edges.
 Moreover, it is possible to find a $k$-edge colouring of $G$ in $O(p^2k^{\frac{k^2p}{2}+4}+k^2(n-p)+n+m)$ time (if it exists).
\end{theorem}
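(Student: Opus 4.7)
The plan is to use Observation~\ref{o-max} to reduce to $\Delta(G)=k$; this is done in $O(n+m)$ time by computing the degrees, and if $\Delta(G)\ne k$ we are done. From now on let $S$ be the set of $p$ vertices of degree exactly $k$, so that every vertex outside $S$ has degree at most $k-1$ and all ``tight'' constraints for a $k$-edge colouring are concentrated around $S$. Let $N[S]:=S\cup\bigcup_{v\in S}N(v)$ denote the closed neighbourhood of $S$.

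First I would brute-force enumerate all proper $k$-edge colourings of the subgraph $H:=G[N[S]]$ induced by $N[S]$. A degree-sum bound controls the size of this enumeration: each of the $p$ vertices of $S$ contributes $k$ to the degree sum of $H$ (since all of its edges lie inside $H$), while each of the at most $kp$ vertices of $N(S)\setminus S$ contributes at most $k-1$. Thus $|E(H)|\le \tfrac{1}{2}(kp+(k-1)\cdot kp)=k^2p/2$, so at most $k^{k^2p/2}$ assignments of colours to $E(H)$ need to be examined, and properness of each can be tested in $O(k^2p)$ time.

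For each proper colouring $c$ of $H$ the remaining task is to decide whether $c$ extends to a proper $k$-edge colouring of $G$. The uncoloured subgraph $G':=G-E(H)$ has maximum degree at most $k-1$ (its vertices all lie outside $S$), and a crucial structural fact is that $B:=N(S)\setminus S$ is independent in $G'$, because every $B$--$B$ edge already belongs to $H$. Consequently every edge of $G'$ has at least one endpoint in $T:=V\setminus N[S]$ where the full palette $\{1,\ldots,k\}$ is still available, and at each $v\in B$ the still-available palette has size $k-d_H(v)\ge d_{G'}(v)+1$, giving a slack of one colour at every vertex of $G'$. I would complete $c$ by a Vizing-fan/Kempe-chain recolouring in this asymmetric setting: process the edges of $G'$ one by one, and whenever no colour is simultaneously free at both endpoints of an uncoloured edge, rotate an alternating chain of two colours to create one. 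Because $B$ is independent in $G'$, every alternating chain meets at most one constrained vertex as an endpoint, which lets us arrange the rotations so that the colour multiset at each $v\in B$ is preserved and no conflict with the fixed colouring of $H$ is introduced.

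The hard part will be showing that this extension procedure always succeeds whenever the current $c$ is the restriction to $H$ of some global $k$-edge colouring of $G$: list edge colouring is $\NP$-hard in general, so the argument must exploit precisely the combination of the degree-plus-one slack at every vertex together with the ``at most one constrained endpoint per edge'' property inherited from the independence of $B$ in $G'$. Once this correctness lemma is in place, the enumeration phase runs in $k^{k^2p/2}\cdot\mathrm{poly}(p,k)$ time matching the first bound of Theorem~\ref{t-main}, and a final greedy/Vizing pass over $G-S$ produces an explicit colouring in the additional $O(k^2(n-p))$ time stated in the theorem.
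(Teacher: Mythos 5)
Your setup matches the paper's: reduce to $\Delta=k$ via Observation~\ref{o-max}, bound $|E(H)|\le k^2p/2$ for the semi-core $H=G[N[S]]$ by the same degree-sum count, and brute-force the at most $k^{k^2p/2}$ colourings of $H$. The gap is that the entire content of the theorem lies in the step you explicitly defer as ``the hard part'': the claim that a proper $k$-edge colouring of $H$ can always be extended to $G$. This is exactly the paper's Theorem~\ref{t-mf10} (due to Machado and de~Figueiredo) together with Lemma~\ref{l-structural}, whose proof is an Ehrenfeucht--Faber--Kierstead-style argument: the vertices outside the semi-core are added one at a time, and for each new vertex $u_i$ (of degree at most $k-1$, all of whose neighbours also have degree at most $k-1$ and hence miss at least two colours) the incident edges are coloured greedily while maintaining two invariants on the missing-colour sets, with a Kempe-chain swap as fallback. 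Without this lemma the proposal proves nothing beyond the trivial direction.

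Two further points. First, you state the extension claim in a conditional form --- ``the procedure succeeds whenever $c$ is the restriction of some global colouring'' --- but then use it as if it were unconditional. With only the conditional version, the decision algorithm must run the extension attempt inside the enumeration loop, giving a running time of order $k^{k^2p/2}\cdot k^2 n$ rather than the additive $O(p^2k^{k^2p/2+4}+n+m)$ claimed; the correct and needed statement is the unconditional one: \emph{every} proper $k$-edge colouring of the semi-core extends, so the decision is made on $H$ alone and the extension is run exactly once. Second, your plan to perform the Kempe rotations so that ``no conflict with the fixed colouring of $H$ is introduced'' imposes a constraint that is both unnecessary and potentially unachievable: an alternating $(a,b)$-chain starting at a vertex of $B=N(S)\setminus S$ may be forced to traverse edges of $H$, and the paper's proof simply lets it do so --- the semi-core colouring is allowed to change during the extension, since the goal is any $k$-edge colouring of $G$, not an extension of the specific $c$. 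Your structural observations (slack of one colour at every vertex of $G'$, independence of $B$ in $G'$) are the right ingredients, but the invariant-maintaining argument that turns them into a correct $O(k^2(n-p))$-time procedure is missing.
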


As we assume that $k$ is a fixed constant, Theorem~\ref{t-main} implies that for every $k\geq 3$, the $k$-{\sc Edge Colouring} problem is fixed-parameter tractable when parameterized by the number of vertices of maximum degree.
We prove Theorem \ref{t-main} in Section~\ref{s-main} by using the alternative proof of Theorem \ref{t-vizing} given by
Ehrenfeucht, Faber and Kierstead~\cite{EFK84}. We first discuss some related work.

\subsection*{Related Work}

Apart from a number of (classical) complexity results (e.g.~\cite{Bo90,CE91,AMM15,SG07,GLPR,MFT13,Ma14,Ma17,OMS98}), most results on edge colouring are related to Theorem~\ref{t-vizing} and are of a more structural nature. That is, they involve the derivation of sufficient or necessary conditions for a graph to be $\Delta$-edge colourable  (see, for example~\cite{CFK95,CH89,HZ93,LGZA15}), in which case the graph is said to be Class~1 (and Class~2 otherwise). This is also the focus of papers on edge colouring related to the number of maximum-degree vertices (see, for example,~\cite{CH84b,Ch96,NV90,Ya86}). In particular, 
Fournier~\cite{Fo77} proved that every graph~$G$ in which the vertices of degree~$\Delta$ induce a forest is Class~1.
As a consequence, a graph with at most two vertices of maximum degree is Class~1.
In~\cite{CH84} and~\cite{CH85}, Chetwynd and Hilton gave necessary and sufficient polynomial-time verifiable conditions for a graph with three, respectively, four vertices of maximum degree to be Class~1. The same authors proved an analogous result for $n$-vertex graphs with $r$ vertices of maximum degree in~\cite{CH90}, assuming $\Delta\geq \lfloor\frac{n}{2}\rfloor+\frac{7}{2}r-3$.

For more on edge colouring we refer to the recent survey of Cao, Chen, Jing, Stiebitz and Toft~\cite{CCJST}.

\section{The Proof of Theorem~\ref{t-main}}\label{s-main}

In this section we prove Theorem~\ref{t-main}. 
Let $G=(V,E)$ be a graph and let $S\subseteq V$.
The graph $G[S]=(S,\{uv\in E(G)\; |\; u,v\in S\})$ is the subgraph of $G$ {\it induced} by~$S$. 
The set $N(S) = \bigcup_{u \in S} N(u)$ is the set of vertices adjacent to at least one vertex in $S$.
Let $X$ be the set of maximum-degree vertices of $G$. In this context, the graph~$G[X]$ is said to be the {\it core} of $G$, whereas the graph $G[X\cup N(X)]$ is said to be the {\it semi-core} of $G$.

Machado and de~Figueiredo~\cite{MF10} proved the following result.

\begin{theorem}[\cite{MF10}]\label{t-mf10}
Let $G$ be a graph and $k\geq 0$ be an integer.
Then $G$ has a $k$-edge colouring if and only if its semi-core has a $k$-edge colouring. 
\end{theorem}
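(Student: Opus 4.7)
\medskip

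\noindent\textbf{Proof plan.}
One direction is immediate: any $k$-edge colouring of $G$ restricts to a $k$-edge colouring of the semi-core $S = G[X\cup N(X)]$. For the converse, suppose we are given a $k$-edge colouring $c$ of $S$. Since every $x\in X$ has all of its $\Delta$ incident edges inside $S$ (by definition of $N(X)$), the maximum degree of $S$ equals $\Delta$, and thus $k\geq \Delta$ is forced. The plan is to extend $c$ to the rest of $G$.

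\medskip

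Let $Y = V\setminus (X\cup N(X))$ and let $H$ be the spanning subgraph of $G$ whose edges are those not in $S$. Every edge of $H$ has both endpoints outside $X$: indeed, any edge incident to a vertex of $X$ lies in $S$ since its other endpoint belongs to $N(X)\subseteq X\cup N(X)$. In particular, $\Delta(H)\leq \Delta-1\leq k-1$. I would argue by induction on $|Y|$ (equivalently, on the number of edges outside $S$). The base case $Y=\emptyset$ gives $G=S$ and is trivial. Otherwise pick $v\in Y$ and consider $G-v$. Because $v\notin N(X)$, no vertex in $X$ loses a neighbour when $v$ is removed, so $X$ remains the set of maximum-degree vertices of $G-v$ and $N(X)$ is unchanged; hence $G-v$ has the same semi-core $S$ as $G$. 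The inductive hypothesis then yields a $k$-edge colouring $c'$ of $G-v$ extending $c$, and it only remains to colour the $d = d_G(v)\leq \Delta-1$ edges incident to $v$.

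\medskip

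For each neighbour $u$ of $v$ we have $u\notin X$ (else $v\in N(X)$), so $d_G(u)\leq \Delta-1$ and $d_{G-v}(u)\leq \Delta-2$; under $c'$, at least $k-\Delta+2\geq 2$ colours are therefore free at $u$. The first step would be to attempt a direct system of distinct representatives on the free-colour lists of $u_1,\ldots,u_d$: whenever Hall's condition holds, we can assign pairwise distinct free colours to $vu_1,\ldots,vu_d$ and finish. If it fails, the plan is to invoke the classical Vizing-style fan and Kempe chain recolouring rooted at $v$: choose an uncoloured edge $vu_i$, a colour $\alpha$ free at $u_i$, a colour $\beta$ free at $v$, and flip the maximal $(\alpha,\beta)$-alternating path starting at $u_i$; then repeat on the remaining uncoloured edges incident to $v$. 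The crucial slack that makes the recolouring succeed is exactly $k-d_G(u)\geq 1$ at every neighbour $u$ of $v$, which is available because $v\in Y$ ensures no neighbour of $v$ lies in $X$.

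\medskip

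\noindent\textbf{Main obstacle.} The delicate case is $k=\Delta$, where a naive Kempe swap can merely displace the conflict instead of resolving it. The plan is to handle this by exploiting that all of $v$'s neighbours, and indeed all vertices on any fan built at $v$, have degree at most $\Delta-1=k-1$, so the fan can always be extended or closed by a Kempe swap within the $k$ available colours. Executing this argument carefully, and verifying that the recolourings never disturb the already-fixed colouring of $S$ on edges incident to $X$ (they do not, since every fan vertex lies in $V\setminus X$, and all Kempe chains stay in $V\setminus X$ as well because $X$-vertices see every colour at most once and thus do not propagate alternating paths into $S$), is the technical crux of the proof.
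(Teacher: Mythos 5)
Your proposal takes a genuinely different route from the paper's. The outer frame is the same: the forward direction is the trivial restriction, and for the converse you peel off the vertices of $V\setminus(X\cup N(X))$ one at a time, observing that such a vertex $v$ and all of its neighbours lie outside $X$ and hence have degree at most $\Delta-1\leq k-1$ --- exactly the slack the paper exploits. Where you diverge is the single-vertex extension step: you invoke the classical Vizing fan plus Kempe-chain recolouring, which is essentially the original Machado--de Figueiredo argument that the paper explicitly positions its new proof against. The paper instead uses the Ehrenfeucht--Faber--Kierstead method: it colours the edges at the new vertex one by one while maintaining two invariants (every uncoloured edge $u_iv$ has a nonempty set of common free colours, and at most one such edge has exactly one common free colour), and in the bad case a counting argument produces a colour $b$ free at $u_i$ but present at every uncoloured neighbour, after which a single $(a,b)$-alternating-path swap finishes the step. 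The EFK route avoids fans entirely and yields the clean $O(k^2)$ per-vertex bound the paper needs for its overall running time; your route is equally valid for the existence statement and closer to the textbook proof of Vizing's theorem.

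Two caveats. First, your \emph{Main obstacle} paragraph defers the crux rather than resolving it: the reason the fan argument succeeds with only $k=\Delta$ colours is precisely that $v$ and every vertex in any fan rooted at $v$ is a neighbour of $v$, hence outside $X$, hence misses a colour. You identify this slack correctly, but the fan rotation and Kempe-chain case analysis still have to be carried out, and as written the proof stops exactly where the work begins. Second, the claim that Kempe chains \emph{stay in $V\setminus X$} because $X$-vertices \emph{do not propagate alternating paths} is backwards: a vertex of degree $\Delta=k$ sees all $k$ colours, so it misses neither $\alpha$ nor $\beta$ and therefore continues every $(\alpha,\beta)$-path rather than terminating it; the chains can and do recolour edges of the semi-core. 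Fortunately this is harmless, since the theorem only asserts that $G$ has \emph{some} $k$-edge colouring: phrase the induction as ``$G-v$ has the same semi-core $S$, which is $k$-edge colourable, hence $G-v$ is $k$-edge colourable,'' and drop the requirement that $c'$ extend $c$. (The paper's own swap in its Case~2 also runs through the semi-core.)
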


The proof of Theorem~\ref{t-mf10} is based on an application of the recolouring procedure of Vizing~\cite{Vi64} for proving Theorem~\ref{t-vizing} and which was also used by Misra and Gries~\cite{MG92} for giving a constructive proof of Theorem~\ref{t-vizing}. 
As explained by Zatesko~\cite{Za18}, the proof of Theorem~\ref{t-mf10} immediately yields a polynomial-time algorithm
for finding a $k$-edge colouring of a graph $G$ given a $k$-edge colouring of its semi-core (if it exists).
Below we give a new, short algorithmic proof of Theorem~\ref{t-mf10}, with the same time complexity,
via a modification of the alternative proof of Theorem~\ref{t-vizing} by Ehrenfeucht et al.~\cite{EFK84}, which might be of independent interest. We state this result as a lemma, as we use it to prove our main result.

\begin{lemma}\label{l-structural}
Let $G$ be a graph with a core of size~$p$.
Given a $k$-edge colouring of its semi-core, it is possible to construct in time $O(k^2(n-p))$ a $k$-edge colouring of $G$. 
\end{lemma}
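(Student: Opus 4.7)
The plan is to extend the given $k$-colouring of the semi-core $G[X \cup N(X)]$ to a $k$-colouring of all of $G$ by inserting the remaining uncoloured edges one at a time, using the recolouring procedure that underlies the alternative proof of Theorem~\ref{t-vizing} due to Ehrenfeucht, Faber and Kierstead~\cite{EFK84}. Let $Y = V \setminus (X \cup N(X))$, so $|Y| \leq n - p$. An edge lies in the semi-core if and only if both of its endpoints lie in $X \cup N(X)$, so the uncoloured edges are exactly those incident to $Y$. Every such endpoint lies outside $X$ and therefore has degree at most $k - 1$ in $G$. Consequently, whenever we look at an uncoloured edge $uv$, both $u$ and $v$ have at most $k - 2$ already-coloured incident edges, and each is missing at least two colours from $\{1, \dots, k\}$ --- exactly the slack that the EFK argument exploits to prove Vizing's bound with $\Delta + 1$ colours.

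I would iterate over the vertices of $Y$ in an arbitrary order, and for each $v \in Y$ colour the at most $d_G(v) \leq k - 1$ still-uncoloured edges incident to $v$ one at a time. For the current edge $vu$, if some colour is missing at both $v$ and $u$ I simply assign it; otherwise I run one round of the EFK fan/Kempe-chain recolouring around $u$ and $v$. In either case the slack established above guarantees that the colouring can be extended, by exactly the same argument as in EFK's proof of Theorem~\ref{t-vizing}. Once all edges at $v$ have been coloured, I move to the next vertex of $Y$, so every uncoloured edge is processed exactly once (when the first of its endpoints in the chosen order of $Y$ is handled, noting that at least one endpoint lies in $Y$).

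The main obstacle is the running-time analysis: a naive Vizing fan together with a Kempe-chain swap can traverse $\Omega(n)$ already-coloured edges and give $O(kn)$ per inserted edge, which would be too slow. The point to exploit is that both endpoints of every uncoloured edge have degree strictly less than $\Delta = k$ --- a slack unavailable in the general Vizing setting, where the bound $\Delta + 1$ is needed precisely because some vertices can have degree $\Delta$. Carefully implementing the EFK recolouring procedure in $O(k)$ time per edge by means of this slack (maintaining for each incident vertex its $O(k)$-sized list of used colours and confining the fan to the local neighbourhood) is the main technical step; once it is in place, summing $O(k)$ per edge over at most $k - 1$ edges per vertex of $Y$ and over at most $|Y| \leq n - p$ vertices of $Y$ yields the claimed $O(k^2(n - p))$ running time.
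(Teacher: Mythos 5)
Your overall strategy is the paper's: bring the vertices of $V \setminus (X \cup N(X))$ into the colouring one at a time, observe that every uncoloured edge has both endpoints outside $X$ and hence of degree at most $k-1$ (so each misses at least two colours when its last edge is still uncoloured), and insert each edge by an Ehrenfeucht--Faber--Kierstead-style recolouring. That setup is correct and matches the paper. The gap is that the entire content of the lemma is the insertion step, and you leave it as a pointer to \cite{EFK84}. The one thing that must be checked --- and that you do not check --- is why an argument designed to work with $\Delta+1$ colours survives with only $k=\Delta$ colours: vertices of $X$ may miss no colour at all, so a fan or chain that needs a missing colour at an arbitrary neighbour can get stuck. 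The resolution is that the recolouring must be anchored at the new vertex $u_i \in V\setminus(X\cup N(X))$, \emph{all} of whose neighbours avoid $X$ and therefore miss a colour; your phrase ``one round of the EFK fan/Kempe-chain recolouring around $u$ and $v$'' does not pin down the centre, and run from the wrong endpoint (one in $N(X)$, whose own neighbours may lie in $X$) the argument fails. Concretely, the paper maintains two invariants over \emph{all} uncoloured edges $u_iv$ at the current vertex --- every set $F(u_iv)=F(u_i)\cap F(v)$ is nonempty, and at most one has size one --- chooses which edge to colour with which colour so as to preserve them (your ``assign any common missing colour'' can break the invariant for a sibling edge), and in the remaining case uses the count $2\,|\bigcup_{v\in W}F(u_iv)| \le \sum_{v}|F(u_iv)| \le 2|W|$ against $|F(u_i)|\ge |W|+1$ to produce a colour $b$ missed by $u_i$ but by no uncoloured neighbour, then performs a single alternating-path swap starting from the neighbour $w$ minimising $|F(u_iw)|$. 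None of this machinery appears in your proposal, and it is exactly what a proof of the lemma has to supply.

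On the running time, the mechanism you propose --- achieving $O(k)$ per edge by ``confining the fan to the local neighbourhood'' --- is not justified and is not how the bound arises: the alternating $a$/$b$ path along which colours are swapped is a global object and is not confined to any neighbourhood. The per-vertex cost in the paper's accounting comes from maintaining the at most $k-1$ sets $F(u_iv)$, each of size at most $k$, across the insertions at $u_i$. Your target bound matches the lemma, but the justification you sketch for it would not stand as written.
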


\begin{proof}
Recall that $X$ denotes the set of vertices of maximum degree~$\Delta$ of $G$.
Let $c'$ be a $k$-edge colouring of the semi-core $G[X\cup N(X)]$ of $G$ and denote by $q$ the size of the semi-core of $G$. Note that $\Delta\leq k$.
If $V \backslash (X \cup N(X))=\emptyset$, then $c'$ is a $k$-edge colouring of $G$. Assume that $V \backslash (X \cup N(X))\neq \emptyset$.

We write $V \backslash (X \cup N(X))=\{u_1,\ldots,u_{n-q}\}$. We let $V_0=X\cup N(X)$ and $V_i=X\cup N(X)\cup \{u_1,\ldots,u_i\}$ for $i=1,\ldots,n-q$. Note that $G=G[V_{n-q}]$.

We define $c_0=c'$. We now show how to extend $c_0$ vertex by vertex until we obtain a $k$-edge colouring of the whole graph.
That is, for $i=1,\ldots,n-q$, we show how to construct a $k$-edge colouring~$c_i$ of $G[V_i]$ given a $k$-edge colouring~$c_{i-1}$ of $G[V_{i-1}]$ 

Thus suppose that $i\geq 1$ and suppose that we already have a $k$-edge colouring $c_{i-1}$ of $G[V_{i-1}]$ (note that at the start of the procedure, when $i=1$, this is indeed the case, as we have $c_0$).
We say that a vertex $u \in V_{i-1} \cup \{u_i\}$ \textit{misses} colour $\ell$ 
if none of the edges incident to $u$ is coloured $\ell$ by $c_{i-1}$. We denote the set of colours that $u$ misses by $F(u)$. Note that $F(u_i) = \{1,\ldots, k\}$. We write $F(uv)= F(u) \cap F(v)$. We now iteratively colour the edges incident to $u_i$ in $G[V_i]$ in such a way that at any time, the resulting mapping is a partial $k$-edge colouring of $G[V_i]$ that satisfies the two following properties:
\begin{itemize}
\item[(1)] for each uncoloured edge $u_iv$ in $G[V_i]$, we have $F(u_iv) \neq \emptyset$;
\item[(2)] there is at most one uncoloured edge $u_iv$ in $G[V_i]$ with $|F(u_iv)| = 1$.
\end{itemize}
Observe that at the start of this process, when no edge $u_iv$ in $G[V_i]$ has been coloured, $c_{i-1}$ satisfies~(1) and~(2). This can be seen as follows. Vertex $u_i$ is in $G[V_i]$ only adjacent to vertices of $N(X)\cup \{u_1,\ldots,u_{i-1}\}$ (where we define $\{u_1,\ldots,u_{i-1}\}=\emptyset$ if $i=1$).
Every $v\in N(X)\cup \{u_1,\ldots,u_{i-1}\}$ that is adjacent to $u_i$ does not belong to $X$ and thus has degree at most $k-1$. Hence, $v$ is adjacent to at most $k-2$ vertices in $V_i\setminus \{u_i\}$ and thus has $|F(v)|\geq 2$. As $|F(u_i)|=\{1,\ldots,k\}$, this means that each edge
$u_iv$ in $G[V_i]$ has $|F(u_iv)|\geq 2$, implying that (1) and (2) are satisfied.

We will now describe how we can maintain properties (1) and (2) while colouring the edges incident to $u_i$ in $G[V_i]$ one by one.
Throughout this process we maintain a set $W =\{v\in V_{i-1}~|~ u_iv \text{ is an uncoloured edge in $G[V_i]$}\}$, so at the start $W$ consists of all neighbours of $u_i$ in $G[V_i]$. We distinguish two cases.

\medskip
\noindent 
{\bf Case 1.} There exists a colour $\ell \in \bigcup_{v \in W} F(u_iv)$ such that $\ell$ appears in at most one set $F(u_iv)$ for some $v \in W$ with $|F(u_iv)| \leq 2$. Then we assign colour $\ell$ to the edge $u_iv$ for which $|F(u_iv)|$ is the smallest over all sets $F(u_iv)$ that contain~$\ell$. If all these sets have size at least~3, then afterwards (1) and (2) are satisfied. Otherwise, there is a unique smallest set $F(u_iv)$ of size at most~2, which also implies that afterwards (1) and (2) are satisfied. 

\medskip
\noindent
{\bf Case 2.} For each colour~$x \in \bigcup_{v \in W} F(u_iv)$, there exists at least two distinct vertices $w,w' \in W$ 
such that $x \in F(u_iw) \cap F(u_iw')$ and both $F(u_iw)$ and $F(u_iw')$ have size at most~2. This means that
\[  2 \left|\bigcup\limits_{v\in W} F(u_iv)\right| \leq \sum\limits_{v \in W: |F(u_iv)| \leq 2} |F(u_iv)| \leq 2|W|. \]

\medskip
\noindent
Hence, we deduce that $|\bigcup_{v\in W} F(u_iv)| \leq |W|$.
As $u_i$ has degree at most~$k-1$ in $G$, it follows that in $G[V_i]$, at most $k-1-|W|$ edges incident to $u_i$ have already been coloured in this stage. It follows that $|F(u_i)| \geq k-(k-1-|W|)=|W| + 1$. Hence, there exists a colour $b \in F(u_i) \backslash \bigcup_{v \in W} F(u_iv)$. Consequently, each vertex in $W$ must have an incident edge coloured $b$. Now choose a vertex $w \in W$ for which $|F(u_iw)|$ is minimum and consider a colour $a \in F(u_iw)$. We swap colours $a$ and $b$ along the path~$P$ in $G[V_i]$ that starts in $w$ and whose edges are alternatingly coloured $a$ and $b$. This yields another partial $k$-edge colouring of $G[V_i]$. However, now $w$ has no incident edge coloured~$b$ anymore, which means that we can colour the edge $u_iw$ with colour $b$. We observe that for all $v\in W\setminus \{w\}$, the set $F(u_iv)$ remains unchanged except the end-vertex $w^*$ of $P$ if $w^*$ is adjacent to $u_i$; in that case $F(u_iw^*)$ is replaced by $F(u_iw^*) \backslash \{a\}$. However, by minimality of $|F(u_iw)|$, we have that $|F(u_iw)| \leq |F(u_iw^*)|$ and property~(2) implies that if $|F(u_iw)| = 1$, then $|F(u_iw^*)| \geq 2$. Thus, $|F(u_iw^*) \backslash \{a\}| \geq 1$. We conclude that both~(1) and~(2) are satisfied by the newly obtained (partial) $k$-edge colouring of $G[V_i]$.

\medskip
\noindent
After colouring the last uncoloured edge in $G[V_i]$ incident to~$u_i$ we have obtained our $k$-edge colouring~$c_i$ of $G[V_i]$.
Hence, after doing this for $i=n-q$, we have indeed extended $c'$ to a $k$-edge colouring~$c=c_{n-q}$ of $G=G[V_{n-q}]$. 

Finally note that since $q \geq p$, the algorithm has at most $n-p$ extension steps and as each takes $O(k^2)$ time, the total running time is $O(k^2(n-p))$.
This completes the proof of Lemma~\ref{l-structural}.
\qed
\end{proof}

We are now ready to prove Theorem~\ref{t-main}, which we restate below.

\medskip
\noindent
{\bf Theorem~\ref{t-main}.}
{\it For every $k\geq 1$, $k$-{\sc Edge Colouring} can be solved in 
 $O(p^2k^{\frac{k^2p}{2}+4}+n+m)$ time on graphs with $n$ vertices, $p$ of which have maximum degree, and $m$ edges. Moreover, it is possible to find a $k$-edge colouring of $G$ in $O(p^2k^{\frac{k^2p}{2}+4}+k^2(n-p)+n+m)$ time (if it exists).}

\begin{proof}
Let $G=(V,E)$ be an instance of $k$-{\sc Edge Colouring}.
We first compute the maximum degree~$\Delta$ of $G$ and the corresponding set~$X$ of vertices of degree~$\Delta$ in $O(n+m)$ time (so $p=|X|)$.
Let $H=G[X\cup N(X)]$ denote the semi-core of $G$.
By Theorem~\ref{t-mf10}, it suffices to check if $H$ has a $k$-edge colouring. Note that we may assume $\Delta\leq k$ by
Observation~\ref{o-max}, hence
$|N(X)|\leq \Delta p\leq kp$. This means that
\[2|E(H)|=  \sum\limits_{u \in X} d_H(u) + \sum\limits_{u \in N(X)} d_H(u) \leq \sum\limits_{u \in X} d_G(u) + \sum\limits_{u \in N(X)} d_G(u) \leq kp + kp(k-1).\] 
Hence, $H$ has at most $\frac{k^2p}{2}$ edges. Checking if a mapping $f:E(H)\to \{1,\ldots,k\}$ is a $k$-edge colouring takes
$O(k^4p^2)$ time. The number of such mappings is at most $k^{\frac{k^2p}{2}}$.
Hence, brute force checking if $H$ has a $k$-edge colouring takes time $O(p^2k^{\frac{k^2p}{2}+4})$.
Thus the first statement of the theorem follows.
We obtain the second statement by applying Lemma~\ref{l-structural}. \qed
\end{proof}

\section{Conclusions}\label{s-con}

We proved that $k$-{\sc Edge Colouring} is fixed-parameter-tractable when parameterized by the number of maximum-degree vertices.
We note that our proof does not work to show this for {\sc Edge Colouring}, as the set $X\cup N(X)$ may have size $\Omega(k)$.
As such, we pose the following open problem: what is the computational complexity of {\sc Edge Colouring} when parameterized by the number of maximum-degree vertices?

\bigskip
\noindent
{\it  Acknowledgements.}
We thank Leandro Zatesko for bringing paper~\cite{MF10} to our attention.

\end{document}